\newtheorem{theorem}{Theorem}
\newtheorem{lemma}[theorem]{Lemma}
\newtheorem{remark}[theorem]{Remark}
\begin{document}

\title{Coded Caching in a Multi-Server System \\ with Random Topology}
\author{
\IEEEauthorblockN{Nitish Mital, Deniz G{\"u}nd{\"u}z and Cong Ling}
\IEEEauthorblockA{Department of Electrical and Electronic Engineering \\Imperial College London\\
Email: \{n.mital, d.gunduz, c.ling\}@imperial.ac.uk
\vspace*{-0.65cm}
}}


\maketitle

\begin{abstract}
Cache-aided content delivery is studied in a multi-server system with $P$ servers and $K$ users, each equipped with a local cache memory. In the delivery phase, each user connects randomly to any $\rho$ out of $P$ servers. Thanks to the availability of multiple servers, which model small base stations with limited storage capacity, user demands can be satisfied with reduced storage capacity at each server and reduced delivery rate per server; however, this also leads to reduced multicasting opportunities compared to a single server serving all the users simultaneously. A joint storage and proactive caching scheme is proposed, which exploits coded storage across the servers, uncoded cache placement at the users, and coded delivery. The delivery \textit{latency} is studied for both \textit{successive} and \textit{simultaneous} transmission from the servers. It is shown that, with successive transmission the achievable average delivery latency is comparable to that achieved by a single server, while the gap between the two depends on $\rho$, the available redundancy across servers, and can be reduced by increasing the storage capacity at the SBSs. 
\end{abstract}
\let\thefootnote\relax\footnotetext{This work was supported in part by the European Union`s H2020 research and innovation programme under the Marie Sklodowska-Curie Action SCAVENGE (grant agreement no. 675891), and by the European Research Council (ERC) Starting Grant BEACON (grant agreement no. 725731).}

\section{Introduction}\label{sec:intro}

The unprecedented growth in transmitted data volumes across networks necessitates design of more efficient delivery methods that can exploit the available memory space and processing power of individual network nodes to increase the throughput and efficiency of data availability. Coded caching and distributed storage have received significant attention in recent years as promising techniques to achieve these goals. With proactive caching, part of the data can be pushed into nodes' local cache memories during off-peak hours, called the \textit{placement phase}, to reduce the burden on the network, particularly the wireless downlink, during peak-traffic periods when all the users place their requests, called the \textit{delivery phase}. Intelligent design of the cache contents creates multicasting opportunities across users, and multiple demands can be satisfied simultaneously through coded delivery. Coded caching is able to utilize the cumulative cache capacity in the network to satisfy all the users at much lower rates, or equivalently with lower delivery latency \cite{maddah} - \cite{comb1}.

A different type of coded caching is shown to improve the delivery performance in the so-called ``femtocaching'' scenario \cite{femtocaching}. In femtocaching, files are replicated or coded at multiple cache-equipped small base stations (SBSs) so that a user may reconstruct its request from only a subset of the available SBSs. SBSs can act as edge caches and provide contents to users directly, reducing latency, backhaul load or the energy consumption \cite{femtocaching},\cite{gomez}. Coding for distributed storage systems has been extensively studied in the literature (see, for example, \cite{dimakis_networkcodes}, \cite{dimakis_regen}), and in the femtocaching scenario, ideal rateless maximum distance separable (MDS) codes allow users to recover contents by collecting parity bits from only a subset of SBSs they connect to \cite{femtocaching}.

\begin{figure}[]
\centering
\subfloat[$\rho=2$, $q_1=2, q_2=4, q_3=2$.]{\includegraphics[width=0.25\textwidth]{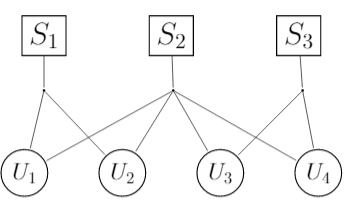}\label{fig:f11}}\\
\subfloat[$\rho=2$, $q_1=4, q_2=4, q_3=0$ (best topology (for successive transmissions), worst topology (for simultaneous transmissions)).]{\includegraphics[width=0.25\textwidth]{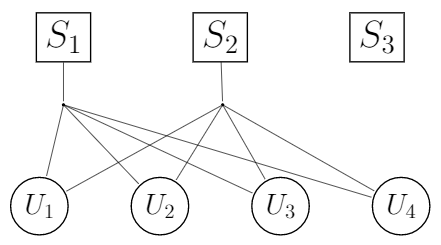}\label{fig:f12}}\\
\subfloat[$\rho=2$, $q_1=3, q_2=3, q_3=2$ (worst topology (for successive transmissions), best topology (for simultaneous transmissions)).]
{\includegraphics[width=0.20\textwidth]{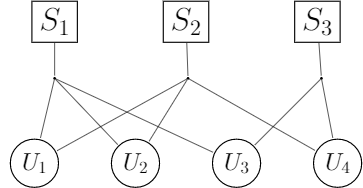}\label{fig:f13} \ 
\includegraphics[width=0.195\textwidth]{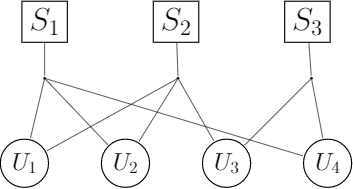}}
\caption{Examples of different network topologies for $P=3$ and $K=4$ with $\rho=2$.\label{fig:f1}}
\end{figure}

In this work, we combine distributed storage at SBSs, similar to the ``femtocaching'' framework \cite{femtocaching}, with cache storage at the users, and consider coded delivery over error-free shared broadcast links \cite{maddah}. We consider a library of $N$ files stored across $P$ SBSs, each equipped with a limited-capacity storage space (see Fig. \ref{fig:f1}). Differently from the existing literature, we consider a random connectivity model: during the delivery phase, each user connects only to a random subset of $\rho$ SBSs, where $\rho \leq P$. This may be due to physical variations in the channel, or resource constraints. Most importantly, these connections that form the network topology are not known in advance during the placement phase; therefore, the cache placement cannot be designed for a particular network topology. Storing the files across multiple SBSs, and allowing users to connect randomly to a subset of them results in a loss in multicasting opportunities for the servers, indicating a trade-off between the coded caching gain and the flexibility provided by distributed storage across the servers, which, to the best of our knowledge, has not been studied before. 

On the other hand, the presence of multiple servers may improve the latency if user requests can be satisfied in parallel. Accordingly, two scenarios are discussed depending on the delivery protocol. If the servers transmit \textit{successively}, i.e., time-division transmission, the total latency is the sum of the latencies on each link in delivering all the requests. If the servers operate in parallel, i.e., \textit{simultaneous} transmission,  then the latency is given by the link with the maximum latency. 

We propose a practical coded cache placement and delivery scheme that exploits MDS coding across servers simultaneously with coded caching and delivery to users. In the successive transmission scenario, we show that the cost for the flexibility of distributed storage is a scaling of the latency by a constant. We also characterize the average worst-case latency (over all user-server associations) of the proposed scheme by assuming that the users connect to a uniformly random subset of the servers; and show that it is relatively close to the best-case performance, which is the single-server centralized delivery time derived in\cite{maddah}, achieved when all users connect to the same set of servers. We observe that, as the server storage capacities increase, the average delivery time-user cache memory trade-off improves, approaching the single-server delivery time. We also identify the delivery latency of the proposed scheme when the servers can transmit simultaneously, and characterize the achievable average worst case delivery time as a function of the server storage capacity for different $\rho$ values.

In a related work \cite{Vaneet}, the authors study coded caching schemes presented in  \cite{maddah} and \cite{chao} when parity servers are available. The authors consider special scenarios with one and two parity servers. They propose a scheme that stripes the files into blocks, and codes them across the servers with a systematic MDS code, and they also propose a scheme for the scenario in which files are stored as whole units in the servers, without striping. In our work, we do not specify servers as parity servers, and instead propose a scheme that generalizes to the use of any type of MDS code and any number of storage servers. We study the impact of the topology on the sum and maximum delivery rates, and the trade-off between the server storage space and the average of these rates. In \cite{shariatpanahi}, the authors consider multiple servers serving the users through an intermediate network of relay nodes, each server having access to all the files in the library. The authors study the delivery delay considering simultaneous transmission from the servers. Note that, our model considers both limited storage servers and random topology over the delivery network, which is unknown at the placement phase. Another line of related works study caching in combination networks \cite{comb1},\cite{comb3}, which consider a single server serving cache-equipped users through multiple relay nodes. The server is connected to these relays through unicast links, which in turn serve a distinct subset of a fixed number of users through unicast links. A combination network with cache-enabled relay nodes is considered in \cite{comb3}. However, the symmetry of a standard combination network, which would be unrealistic in many practical scenarios, and the assumption of a fixed and known network topology during the placement phase make the caching scheme and the analysis fundamentally different from our paper. \\
\textbf{\textit{Notations}}. For two integers $i<j$, we denote the set $\{i,i+1,\ldots, j \}$ by $[i:j]$, while the set $[1:j]$ is denoted by $[j]$. Sets are denoted with the calligraphic font, and $|\mathcal{A}|$ denotes the cardinality of set $\mathcal{A}$. For $\mathcal{A} =\{a_1,a_2,\ldots,a_p\}$, we define $X_{\mathcal{A}} \triangleq (X_{a_1}, \ldots, X_{a_p})$.

\section{System Model}\label{sec:app:1}
We consider the system model illustrated in Fig. \ref{fig:f1} with $P$ servers, denoted by $\mathrm{S}_1, \mathrm{S}_2, \ldots, \mathrm{S}_P$, serving $K$ users, denoted by $\mathrm{U}_1,\mathrm{U}_2, \ldots, \mathrm{U}_K$. There is a library of $N$ files $W_1, W_2, \ldots, W_N$, each of length $F$ bits uniformly distributed over $[2^{F}]$. Each user has access to a local cache memory of capacity $M_UF$ bits, $0 \leq M_U \leq N$, while each server has a storage memory of capacity $M_SF$ bits. The caching scheme consists of two phases: placement phase and delivery phase. We consider a centralized placement scenario as in\cite{maddah}, which is carried out centrally with the knowledge of the servers and the users participating in the delivery phase. However neither the user demands, nor the network topology is known in advance during the placement phase. In the delivery phase, we assume that each user randomly connects to $\rho$ servers out of $P$, where $\rho \leq P$, and requests a single file from the library. For $j \in [K]$, let $\mathcal{Z}_j$ denote the set of servers $\mathrm{U}_j$ connects to, where $|\mathcal{Z}_j|=\rho$, and $d_j \in [N]$ denotes the index of the file it requests. For example, in Fig. \ref{fig:f11}, $\mathcal{Z}_1=\{ \mathrm{S}_1, \mathrm{S}_2 \}$, $\mathcal{Z}_2=\{ \mathrm{S}_1, \mathrm{S}_2 \}$, $\mathcal{Z}_3=\{ \mathrm{S}_2, \mathrm{S}_3 \}$ and $\mathcal{Z}_4=\{ \mathrm{S}_2, \mathrm{S}_3 \}$. Let the demand vector be denoted by $\mathbf{d}\triangleq(d_1, d_2,..., d_K)$. The topology of the network, i.e., which users are connected to which servers, and the demands of the users are revealed to the servers at the beginning of the delivery phase. 

\begin{figure}
\begin{center}
\includegraphics[scale=0.65]{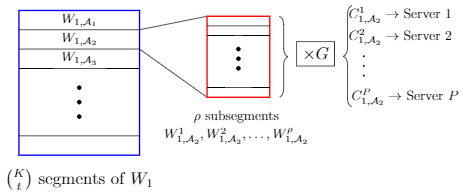}
\caption{Segmentation, MDS coding and placement of files.\label{fig:seg}}
\end{center}
\end{figure}

The complete library must be stored at the servers in a coded manner to provide redundancy, since each user connects only to a random subset of the servers. Since any user should be able to reconstruct any requested file from its own cache memory and the servers it is connected to, the total cache capacity of a user and any $\rho$ servers must be sufficient to store the whole library; that is, we must have $M_U+\rho M_S\geq N$.

Let $\mathcal{K}_p$ denote the set of users served by $\mathrm{S}_p$, for $p\in [P]$, and define the random variable $Q_p \triangleq \left| \mathcal{K}_p \right|$, which denotes the number of users served by $\mathrm{S}_p$. We shall denote a particular realization of $Q_p$ for a given topology as $q_p$, where we have $\sum_{p=1}^{P} q_p = K\rho$. For example, in Fig. \ref{fig:f11}, $\mathcal{K}_1=\{ \mathrm{U}_1,\mathrm{U}_2 \}, \mathcal{K}_2=\{ \mathrm{U}_1, \mathrm{U}_2, \mathrm{U}_3, \mathrm{U}_4\}, \mathcal{K}_3=\{ \mathrm{U}_3, \mathrm{U}_4 \}$. Server $\mathrm{S}_p$ transmits message $X_p$ of size $R_p F$ bits to the users connected to it, i.e., users in set $\mathcal{K}_{p}$, over the corresponding shared link. The message $X_{p}$ is a function of the demand vector $\mathbf{d}$, the network topology, the storage memory contents of server $\mathrm{S}_p$, and the cache contents of the users in $\mathcal{K}_p$. User $U_j$ receives the messages $X_{\mathcal{Z}_j}$, and reconstructs its requested file $W_{d_j}$ using these messages and its local cache contents.

Our goal is to minimize the delivery time, which is the time by which all the user requests can be satisfied. 
The delivery time depends on the operation of the SBSs. If each SBS transmits over an orthogonal frequency band, the requests can be delivered in parallel, and the normalized delivery time is given by $T=\max_{p} R_p$, where $F R_p$ is the number of bits transmitted from server $\mathrm{S}_p$ during the delivery phase. If, instead, the servers transmit successively in a time-division manner, which is suitable for user devices that are simple and not capable of multihoming on multiple frequencies, the normalized delivery time will be given by $T=\sum_{p=1}^{P} R_p$. Our goal will be to find the average worst-case delivery time, where the worst case refers to the fact that all the users can correctly decode their requested files, independent of the combination of files requested by them, and the averaging is over all possible network topologies. Assuming that $N\geq K$ (i.e., the number of files is larger than the number of users), it is not difficult to see that all the users requesting a different file corresponds to the worst-case scenario.

\section{Coded Distributed Storage and Caching Scheme}
We first note that our system model brings together aspects of distributed storage and proactive caching/coded delivery. To see this, consider the system without any user caches, i.e., $M_U=0$, which is equivalent to a distributed storage system with unreliable servers. It is known that MDS codes provide much higher reliability and efficiency compared to replication in this scenario \cite{dimakis_networkcodes}. On the other hand, when the servers are reliable, i.e., $\rho = P$, our system is equivalent to the one in \cite{maddah}, and coded delivery provides significant reductions in the delivery rate. Accordingly, our proposed scheme brings together benefits from coded storage and coded delivery. To illustrate the main ingredients of the proposed scheme we assume $M_S=\frac{N}{\rho}$ in this section. Extension to other  server capacities will be presented in Section \ref{sec:incservercache}.

\subsection{Server Storage Placement } \label{sec:serverstorage}
We first describe how the files are stored across the servers in order to guarantee that each user request can be satisfied from any $\rho$ servers the user may connect to (see Fig. \ref{fig:seg}).

We define $t \triangleq \frac{KM_U}{N}$, and assume initially that it is an integer, i.e., $t\in [0:M_U]$. The solution for non-integer $t$ values can be obtained through memory-sharing \cite{maddah}. Each file is divided into $K\choose t$ equal-size non-overlapping segments. We enumerate them according to distinct $t$-element subsets of $[K]$, where $W_{j, \mathcal{A}}$ denotes the segment of $W_j$ that corresponds to subset $\mathcal{A}$. We have $W_j = \bigcup_{\mathcal{A} \subset [K]: |\mathcal{A}| = t} W_{j, \mathcal{A}}$.

Each segment is further divided into $\rho$ equal-size non-overlapping sub-segments denoted by $W_{j,\mathcal{A}}^l$, $l\in[\rho]$. The $\rho$ sub-segments of each segment are coded together using a $(P,\rho)$ linear MDS code with generator matrix $G$, giving as output $P$ coded versions of the segment $W_{j,\mathcal{A}}$, denoted by $C_{j,\mathcal{A}}^{l}, l\in [P]$. $C_{j,\mathcal{A}}^{l}$ is a linear combination of the subsegments of the segment corresponding to subset $\mathcal{A}$, of the $j^{th}$ file, that is stored in server $\mathrm{S}_l$.
Since each sub-segment is of length $\frac{F}{\rho{K\choose t}}$, every linear combination $C_{j,\mathcal{A}}^{l}$ is of the same length; and hence, server storage capacity constraint of $\frac{NF}{\rho}$ is met with equality.
\begin{remark}We assume that each user knows the generator matrix of the MDS code to be able to reconstruct any coded symbol $C_{j,\mathcal{A}}^{l}$ from the uncoded segment $W_{j,\mathcal{A}}$ stored in its cache memory. \end{remark}
\subsection{User Cache Placement}
For the user caches we use the placement scheme proposed in \cite{maddah}. Each segment of a file, $W_{j,\mathcal{A}}$, is placed into the caches of all the users $\mathrm{U}_k$ for which  $k\in \mathcal{A}$. 
\subsection{Delivery Phase } \label{sec:delivery}
We first make the following observation about the above placement scheme: in the worst-case demand scenario, consider any $t+1$ users. Any $t$ out of these $t+1$ users share in their caches one segment of the file requested by the remaining user. Enumerate these subsets of $t+1$ users as $\mathcal{H}_i$, $i\in \left[{K\choose t+1}\right]$. \\
Consider any server $\mathrm{S}_p$, and one of the $q_p$ users connected to it, say $U_k$. Then, for any subset $\mathcal{H}_i$, that includes $k$, i.e., $k \in \mathcal{H}_i$, the segment $W_{d_k,\mathcal{H}_i \backslash \{k\}}$ is needed by user $\mathrm{U}_k$, but is not available in its cache because $k \notin  \mathcal{H}_i \backslash \{k\}$; while it is available in the caches of the remaining users in $\mathcal{K}_p \bigcap \mathcal{H}_i$. The MDS coded version of $W_{d_k, \mathcal{H}_i \backslash \{k\}} $ stored by $\mathrm{S}_p$ is $C_{d_k, \mathcal{H}_i \backslash \{k\}}^{p}$, and since the users know the generator matrix $G$, each user which has $W_{d_k, \mathcal{H}_i \backslash \{k\}} $ in its cache can reconstruct $C_{d_k, \mathcal{H}_i \backslash \{k\}}^{p} $ as well. Then, for each $\mathcal{H}_i$ that includes at least one user from $\mathcal{K}_p$, $\mathrm{S}_p$ transmits
\begin{align} \label{Tx}
    X_p(\mathcal{H}_i) = \bigoplus_{k \in \mathcal{K}_p \bigcap \mathcal{H}_i} C_{d_k,\mathcal{H}_i \backslash \{k\}}^{p} ,
\end{align}
where $\bigoplus$ denotes the bitwise XOR operation. Then, $\Bigl| \left\{i \in \left[{K\choose t+1}\right]: k \in \mathcal{H}_i \right\} \Bigr| = {K-1 \choose t} $ is the number of messages transmitted by server $\mathrm{S}_p$ that contain the coded version of a segment requested by $\mathrm{U}_k$, and is also equal to the number of segments of $W_{d_k}$ not present in the cache of user $\mathrm{U}_k$. Overall, the message transmitted by $S_p$ is given by
\begin{align} \label{Tx_all}
   X_p = \bigcup_{i \in \left[{K\choose t+1}\right]: \mathcal{K}_p \bigcap \mathcal{H}_i\neq \phi} X_p( \mathcal{H}_i).
\end{align}
From the transmitted message $X_p(\mathcal{H}_i)$ in \eqref{Tx} for each set $\mathcal{H}_i$, user $\mathrm{U}_k$ can decode the MDS coded version $C_{d_k,\mathcal{H}_i \backslash \{ k\}}^{p}$ of its requested segment $W_{d_k, \mathcal{H}_i \backslash \{k\}}$. With the transmissions from all the servers, $\mathrm{U}_k$ receives $\rho$ coded versions of each missing segment from the $\rho$ servers it is connected to. Since each segment is coded with a $(P,\rho)$ MDS code, the user is able to decode each missing segment.\\
\indent Note that each transmitted message $X_p(\mathcal{H}_i)$ by a server is of length $\left. F \right/ \rho {K\choose t}$ bits. The number of transmissions by $\mathrm{S}_p$ is $\left| \left\{ i\in \left[{K\choose t+1}\right]: \mathcal{K}_p \bigcap \mathcal{H}_i \neq \phi \right\} \right|$ $= {K\choose t+1}-\left| \left\{ i\in \left[{K\choose t+1}\right]: \mathcal{K}_p \bigcap \mathcal{H}_i = \phi \right\} \right|={K\choose t+1}-{K-q_p\choose t+1}$. That is, server $\mathrm{S}_p$ transmits ${K\choose t+1}-{K-q_p\choose t+1}$ messages, each of length $\left. F \right/ \rho {K\choose t}$ bits.

The delivery latency performance of this proposed coded storage and delivery scheme with both successive and simultaneous SBS transmissions is studied in the following sections.  

\section{Successive SBS Transmissions}\label{s:Successive}

When the SBSs transmit successively in time, the normalized delivery time is given by
\begin{align} 
   T \triangleq \sum_{p=1}^P R_p &= \frac{1}{\rho {K\choose t}} \sum_{p=1}^{P} \left[{K\choose t+1}-{K-q_p\choose t+1} \right] \nonumber \\
	&=\frac{P}{\rho} \frac{(K-t)}{(t+1)} - \frac{1}{\rho {K\choose t}} \sum_{p=1}^{P}  {K-q_p \choose t+1}.  \label{eq:sumrate}
\end{align}

To characterize the ``best'' and ``worst'' network topologies that lead to the minimum and maximum delivery times, respectively, we present the following lemma without proof.
\begin{lemma}\label{convexity}
For $n_1, n_2 , r \in \mathcal{Z}^{+}$ satisfying $r \leq n_1$ and $n_1 + 2 \leq n_2$, we have 
\begin{align}
    {n_1 \choose r} + {n_2 \choose r} \geq {n_1+1 \choose r} + {n_2-1 \choose r}.
\end{align}
\end{lemma}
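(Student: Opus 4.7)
The plan is to prove this by reducing it, via Pascal's identity, to the simple monotonicity of $\binom{n}{k}$ in the upper argument. Intuitively, the inequality says that the function $n\mapsto\binom{n}{r}$ has non-decreasing differences, i.e., is convex along integers, and the gap between $n_1$ and $n_2$ (assumed to be at least $2$) is what makes the inequality go through.

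First I would rearrange the target inequality to collect the two terms at $n_1$ on one side and the two terms at $n_2$ on the other:
\begin{align*}
\binom{n_2}{r}-\binom{n_2-1}{r}\;\geq\;\binom{n_1+1}{r}-\binom{n_1}{r}.
\end{align*}
Next I would invoke Pascal's identity $\binom{n}{r}=\binom{n-1}{r}+\binom{n-1}{r-1}$ on both sides, which collapses each difference into a single binomial coefficient and yields the equivalent statement
\begin{align*}
\binom{n_2-1}{r-1}\;\geq\;\binom{n_1}{r-1}.
\end{align*}

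Finally I would observe that under the hypotheses $r\leq n_1$ and $n_1+2\leq n_2$, we have $n_2-1\geq n_1+1>n_1\geq r-1\geq 0$, so both binomial coefficients are well defined and nonzero, and the desired inequality follows immediately from the monotonicity of $\binom{n}{r-1}$ in $n$ (which is clear from the combinatorial interpretation: any $(r-1)$-subset of $[n_1]$ is also an $(r-1)$-subset of the larger ground set $[n_2-1]$).

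I do not anticipate any real obstacle here; the only mild subtlety is verifying that the shifted indices remain in the regime where the binomial coefficients behave monotonically (i.e., $r-1\leq n_1\leq n_2-1$), and this is immediate from the stated hypotheses. An equivalent route, if one prefers, is induction on $n_2-n_1\geq 2$ with the base case $n_2=n_1+2$ being the standard discrete-convexity statement $\binom{n_1}{r}+\binom{n_1+2}{r}\geq 2\binom{n_1+1}{r}$, but the Pascal-identity reduction above is the most direct.
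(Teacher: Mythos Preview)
Your argument is correct: rearranging and applying Pascal's identity reduces the claim to $\binom{n_2-1}{r-1}\geq\binom{n_1}{r-1}$, which follows from $n_2-1>n_1\geq r-1\geq 0$. The paper actually states this lemma explicitly \emph{without proof}, so there is no original argument to compare against; your Pascal-identity reduction is the natural elementary route and fills the gap cleanly.
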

The lemma above indicates the ``convex'' nature of the binomial coefficients in \eqref{eq:sumrate}; that is, the points $(r, {r \choose r})$, $(r+1, {r+1 \choose r})$, \ldots, $(n_1+n_2-r, {n_1+n_2-r \choose r})$ form a convex region. From Lemma \ref{convexity}, it can be deduced that the second summation term in \eqref{eq:sumrate} takes its minimum when $\max_{p}(q_p) \leq \min_{p}(q_p) + 1$, $p\in [P]$, i.e., the values of $q_p$ are as close to each other as possible. This corresponds to the class of topologies with the highest delivery times (see Fig. \ref{fig:f13} for an example). The topology that requires the minimum delivery time of $T =\frac{K-t}{t+1}$ is when $q_p$ is either $0$ or $K$ for each server, or equivalently, when all the users are connected to the same $\rho$ servers (see Fig. \ref{fig:f12} for an example). 


Next we study the average worst-case normalized delivery time, where the average is taken over all possible network topologies, assuming a uniformly random user-server association; that is, each user connects to any $\rho$ out of $P$ servers with uniform distribution. As we have seen above, the delivery time depends on the topology, and for a given topology, the ``worst-case'' delivery time refers to the worst-case demand combination when each user requests a different file.

Let $\mathcal{T}$ denote the set of all possible topologies. We have $|\mathcal{T}| = {P\choose \rho}^{K}$. Define $N(q_p)$ as the number of different topologies, in which a particular server $S_p$ serves $q_p \in [0:K]$ users. Then, $N(q_p) = {K\choose q_p}{P-1\choose \rho-1}^{q_p}{P-1\choose \rho}^{K-q_p}$. The following theorem presents the average normalized worst-case delivery time of the proposed scheme.

\begin{theorem}
The worst-case average normalized delivery time of the proposed scheme over all topologies under uniformly random user-server association is given by 
\begin{align}\label{eq:avgrate}
\mathbb{E}[T] =& \frac{P}{\rho} \frac{(K-t)}{(t+1)} - \frac{P}{\rho {K\choose t}}\sum_{q_p=0}^{K} \mathrm{Pr}(Q_p = q_p) {K-q_p\choose t+1} ,
\end{align}
where $\mathrm{Pr}(Q_p = q_p)=\frac{N(q_p)}{ {P\choose \rho}^{K}}$ is the probability of exactly $q_p$ users being served by a particular server.
\end{theorem}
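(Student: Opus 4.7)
My plan is to take the expectation of the pathwise identity \eqref{eq:sumrate} term by term and exploit symmetry across servers. Starting from
\begin{align*}
T = \frac{P}{\rho}\frac{K-t}{t+1} - \frac{1}{\rho\binom{K}{t}}\sum_{p=1}^{P}\binom{K-Q_p}{t+1},
\end{align*}
the first term is deterministic, so linearity of expectation reduces the problem to computing $\mathbb{E}\!\left[\binom{K-Q_p}{t+1}\right]$ for each server $p$. Under the uniformly random user-server association, every user independently picks a $\rho$-subset of $[P]$ with probability $1/\binom{P}{\rho}$, so the law of $Q_p$ is the same for every $p$. This gives $\sum_{p=1}^{P}\mathbb{E}[\binom{K-Q_p}{t+1}] = P\,\mathbb{E}[\binom{K-Q_1}{t+1}]$, after which I just expand the expectation as $\sum_{q_p=0}^{K}\Pr(Q_p = q_p)\binom{K-q_p}{t+1}$ and substitute back.

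The remaining ingredient is to verify the expression for $\Pr(Q_p = q_p)$ by counting topologies. Fix a server $S_p$ and a target count $q_p\in[0:K]$. A topology with this count is determined by (i) choosing which $q_p$ of the $K$ users connect to $S_p$, giving $\binom{K}{q_p}$ options; (ii) for each such user, choosing the remaining $\rho-1$ servers from the $P-1$ servers other than $S_p$, giving $\binom{P-1}{\rho-1}^{q_p}$ options; and (iii) for each of the remaining $K-q_p$ users, choosing their full $\rho$-subset from the $P-1$ servers excluding $S_p$, giving $\binom{P-1}{\rho}^{K-q_p}$ options. Multiplying yields $N(q_p)$ as in the statement, and dividing by the total number $\binom{P}{\rho}^{K}$ of equally likely topologies delivers $\Pr(Q_p=q_p)=N(q_p)/\binom{P}{\rho}^{K}$.

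Combining these two steps immediately produces \eqref{eq:avgrate}. I do not expect a real obstacle here: the argument is essentially linearity of expectation plus a textbook balls-and-bins count. The only place to be slightly careful is to confirm that the decomposition of $T$ into a per-server sum is valid under the worst-case demand hypothesis (each of the $K$ users requesting a distinct file), which was already established in the derivation of \eqref{eq:sumrate} in Section \ref{sec:delivery}, and that the indicator ``at least one user in $\mathcal{K}_p$ intersects $\mathcal{H}_i$'' correctly accounts for the $\binom{K-q_p}{t+1}$ skipped transmissions, so that equation \eqref{eq:sumrate} holds for every realized topology and can be averaged directly.
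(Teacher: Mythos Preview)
Your proposal is correct and follows essentially the same route as the paper: the paper writes the average as a sum over all $\binom{P}{\rho}^K$ topologies of \eqref{eq:sumrate}, swaps the order of summation over $p$ and over topologies, and then collapses the topology sum using $N(q_p)$, which is exactly your linearity-of-expectation plus symmetry argument phrased in counting language. Your added derivation of $N(q_p)$ is a bonus, since the paper simply states that formula before the theorem without justifying it inside the proof.
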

\begin{proof}
Each topology $\tau \in \mathcal{T}$ is represented by a particular tuple $q_{\tau} = (q_1, \ldots, q_P)$. Each topology is distinct, but not all tuples are necessarily distinct. This is demonstrated in Fig. \ref{fig:f13}, where two distinct topologies have the same tuple $q_{\tau} = (3, 3, 2)$ associated with them. The expectation of the worst-case delivery rate over all possible topologies $\tau \in \mathcal{T}$ can be written as
\begin{align*}
     \mathbb{E}[T]&= \frac{ \sum_{q_{\tau}, \tau \in \mathcal{T}} \left[\frac{P}{\rho} \frac{(K-t)}{(t+1)} - \frac{1}{\rho {K\choose t}} \sum_{p=1}^{P}  {K-q_p \choose t+1}\right]}{{P\choose \rho}^{K}}\\
 &= \frac{P}{\rho} \frac{(K-t)}{(t+1)}-  \frac{1}{\rho {P\choose \rho}^{K}{K\choose t} } \sum_{q_{\tau}, \tau \in \mathcal{T}} \sum_{p=1}^{P} {K-q_p \choose t+1}\\
& = \frac{P}{\rho} \frac{(K-t)}{(t+1)}  - \sum_{p=1}^{P} \sum_{q_p=0}^{K} \frac{N(q_p)}{\rho {P\choose \rho}^{K}{K\choose t}}{K-q_p\choose t+1}
     \end{align*}
\begin{align*}
   &= \frac{P}{\rho} \frac{(K-t)}{(t+1)}  - \frac{P}{\rho{K\choose t}} \sum_{q_p=0}^{K} \frac{N(q_p)}{ {P\choose \rho}^{K}}{K-q_p\choose t+1}.
\end{align*}
\end{proof}

\subsection{Redundancy in Server Storage Capacity} \label{sec:incservercache}

Above the server storage capacity is fixed as $M_S=\frac{N}{\rho}$. The minimum server storage capacity that would allow the reconstruction of any demand combination is given by $M_S=\frac{N-M_U}{\rho}$. In this case, we cache $\frac{M_U}{N}$ fraction of the library in the user caches during the placement phase, and transmit the remaining fraction of the library from the servers. The worst-case delivery time in this case is given by $T=K\left( 1-\frac{M_U}{N}\right)$.\\
\indent Next, we consider the case when there is redundancy in server memories; that is, we have $\frac{N}{\rho}< M_S \leq N$. Assume that $M_S=\frac{N}{\rho - z}$ for some integer $ z\in [\rho -1]$. For non-integer values of $z$, the solution can be obtained by memory-sharing.

\begin{figure}
\begin{center}
\includegraphics[scale=0.25]{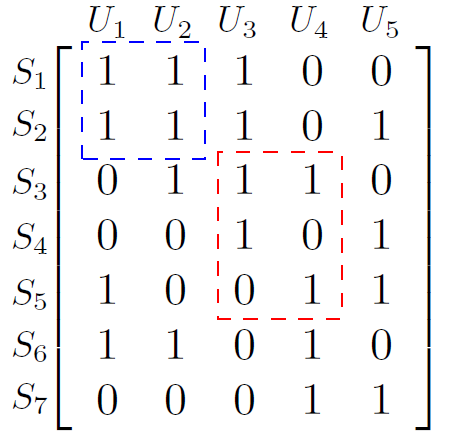}
\caption{An example $7 \times 5$ incidence matrix ($P=7, K=5$) with $\rho = 4$.}
\label{fig:conn_matrix}
\end{center}
\end{figure}

In this case, a $(P,\rho-z)$ MDS code is used for server storage placement, which allows each user  to reconstruct any requested file by connecting to $\rho -z$ servers. The user cache placement is done as in the previous section. In the delivery phase, each user randomly connects to $\rho$ servers. We now have a degree of freedom thanks to the additional storage space available at each server. Each user can get a particular segment from only a $\rho -z$ subset of the $\rho$ servers it is connected to by receiving one copy from each of those servers. The choice of the servers that will deliver the coded subsegments to each user is done such that the multicasting opportunities across the network are maximized. Construct an incidence matrix $A$ of dimensions $P \times K$ such that $a_{ij}=1$ if server $i$ is connected to user $j$, $a_{ij}=0$ otherwise. Consider the $t+1-$element subset $\mathcal{H}_i$, and the file segments $W_{d_k,\mathcal{H}_i\backslash \{ k\}}, \forall k\in \mathcal{H}_i$. Consider the columns of $A$ corresponding to the users in $\mathcal{H}_i$ and the matrix $Q$ formed by them. Define the minimum cover of $\mathcal{H}_i$ as the smallest $l$ for which a $l\times t+1$ submatrix of $Q$ has at least $\rho-z$ non-zero values in each column. The servers corresponding to the $l$ rows of this submatrix have to transmit one coded message each to satisfy completely the requests for the missing segments corresponding to $\mathcal{H}_i$.  Therefore, the total number of transmissions required to deliver the segments $W_{d_k,\mathcal{H}_i\backslash \{ k\}}, k\in \mathcal{H}_i$, is equal to the minimum cover of $\mathcal{H}_i$. \\
\indent As an example, consider the incidence matrix as shown in Fig. \ref{fig:conn_matrix} which corresponds to a system with $P=7$ servers and $K=5$ users, where each user connects to $\rho=4$ servers. Assume that the server storage capacity is $M_S=\frac{N}{\rho-2}$ and $t=1$. In this setting, coded subsegments of requested files can be delivered to $t+1=2$ users through multicasting, and it is sufficient for each user to receive coded segments from $\rho-2=2$ servers. Then, for the user set $\mathcal{H}_i=\{1,2\}$, we consider the submatrix corresponding to the columns $1$ and $2$ and rows $1$ and $2$ (marked by the blue dashed lines in Fig. \ref{fig:conn_matrix}), which is the smallest submatrix satisfying the condition that each column has at least $\rho-z=2$ $1$s. Hence, the minimum cover for $\mathcal{H}_i$ is equal to the number of rows of this submatrix, that is, $2$. Similarly, for $\mathcal{H}_i=\{3,4\}$ (marked by the red dashed lines in Fig. \ref{fig:conn_matrix}), and the minimum cover is $3$. Thus, from \eqref{Tx}, for segments $W_{d_k,\{3,4\} \backslash \{ k\}}, k\in \{3,4\}$, $S_3$ transmits the message $X_3(\{3,4\}) = \bigoplus_{k \in \{3,4\}} C_{d_k,\{3,4\} \backslash \{k\}}^{3}$, $S_4$ transmits $X_4(\{3,4\})=C_{d_3,\{4\}}^{4}$, and $S_5$ transmits $X_5(\{3,4\})=C_{d_4,\{3\}}^{5}$. The total number of transmissions is $3$. This algorithm can be applied to transmit all the missing segments of the requested files.

\section{Simultaneous SBS Transmissions}\label{s:parallel}

The normalized delivery time when the SBSs can transmit in parallel is given by:
\begin{align}\label{eq:link_load}
T \triangleq \max_{q_p} \frac{1}{\rho {K\choose t}} \left[{K\choose t+1}-{K-q_p\choose t+1} \right].
\end{align}

The ``best'' and ``worst'' network topologies are different from the successive transmission scenario. The topology with the minimum value of the maximum $q_p$, i.e., where the values of $q_p$ are as close to each other as possible, has the ``best'' (lowest) delivery time, contrary to the successive transmission scenario, in which this would be the ``worst'' topology. The corresponding delivery time  can be obtained by substituting $q_p=\lceil \frac{K\rho}{P} \rceil$ in \eqref{eq:link_load}.  The topology with the maximum possible $q_p$, i.e., any topology with at least one server connected to all $K$ users, is the ``worst'' topology since it has the highest delivery time.  

To minimize the delivery time in the scenario with parallel simultaneous delivery from the servers, a greedy server allocation algorithm is used which applies the algorithm used in Section \ref{s:Successive} in a greedy manner to balance the number of transmissions from the servers at each iteration.

\section{Results and Discussions} \label{s:Discussion}

In Fig. \ref{fig:f3} we plot the achievable trade-off between the user cache capacity and the normalized delivery time for the best and worst topologies, and the average normalized delivery time over all topologies, for successive transmission. The average normalized delivery time for the parallel transmission scenario is also plotted. The trade-off curves are plotted for different server storage capacities. We observe that the gap between the worst and the best topologies can be significant. From \eqref{eq:sumrate} and \eqref{eq:avgrate} we can deduce that, for successive transmission the worst topology delivery time; and hence, the average delivery time of the proposed scheme are both within a multiplicative factor of $\frac{P}{\rho}$ of the best topology delivery time. 

In Fig. \ref{fig:f4} the average delivery time-server storage capacity trade-off is plotted for server storage capacities of $M_S\in [\frac{N-M_U}{\rho}, N]$; the plot is obtained by performing multiple simulations with random realizations of the topology and averaging the achievable delivery time over them. We observe from Fig. \ref{fig:f3} that the delivery time decreases significantly, particularly for low $M_U$ values, as the redundancy in server storage increases. We observe from Fig. \ref{fig:f4} that the average delivery time decreases rapidly for an initial increase in the server storage capacity, and the decrease can become significantly fast for high $\rho$ values. This is because, thanks to the MDS-coded caching at the servers, the number of available multicasting opportunities increases with the redundancy across servers.

The average delivery time for parallel transmissions is plotted with respect to server storage capacity, $M_U$, in Fig. \ref{fig:f5}. As opposed to the delivery time for successive transmissions, we can see that the delivery time does not saturate, and keeps decreasing until all the files are stored at each of the servers. We also observe that the reduction in the delivery time with $\rho$ saturates as $\rho$ increases.

\section{Conclusions}
\begin{figure}[t]
\begin{center}
\includegraphics[width=6.5cm, height=6.0cm]{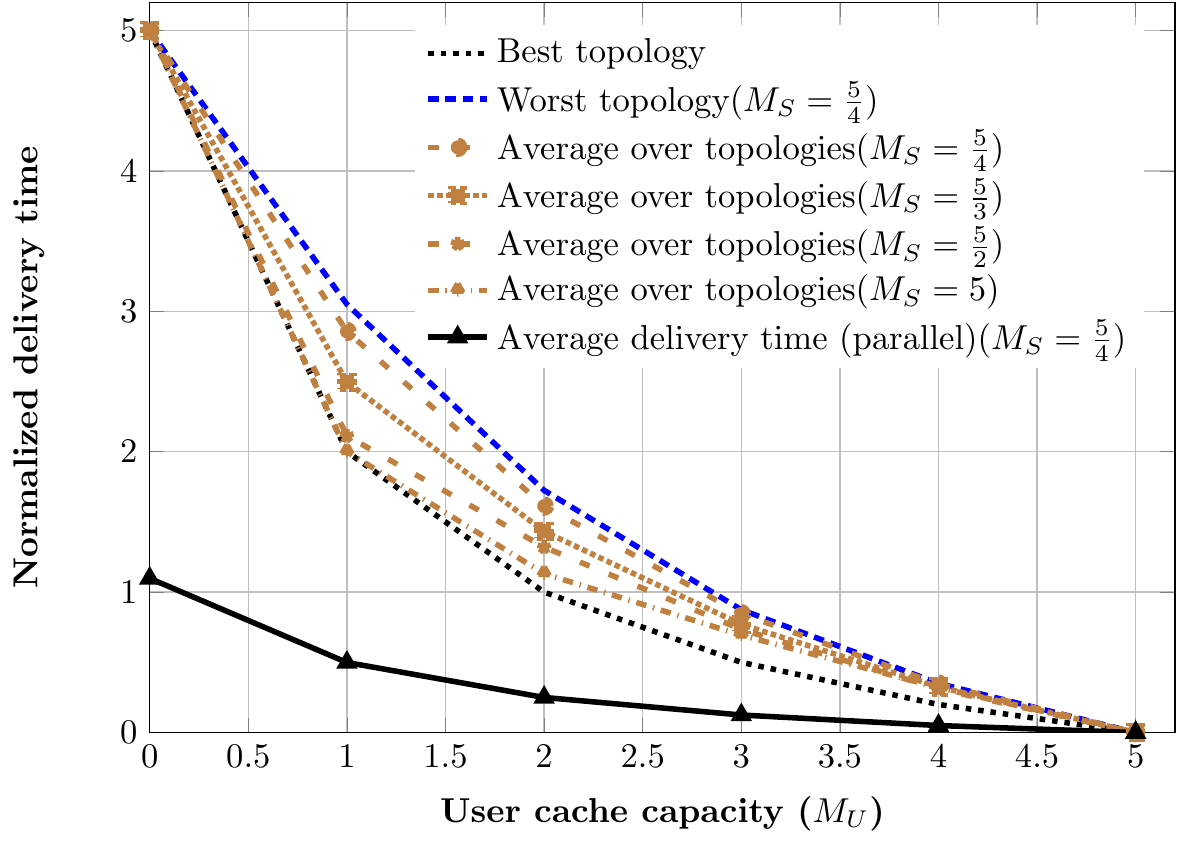}
\caption{Average normalized delivery time vs. user cache capacity $M_U$, for $P=7, N=K=5, \rho=4$, and for server storage capacities of $M_S=\frac{5}{4}, \frac{5}{3}, \frac{5}{2}, 5$. The best and worst topologies are as illustrated in Fig. \ref{fig:f1}. The average delivery time for parallel transmissions is also plotted. }\label{fig:f3}
\end{center}
\end{figure}
We have presented a multi-server coded caching and delivery network, in which cache-equipped users connect randomly to a subset of the available servers, each with its own limited storage capacity. While this allows each server to have only a limited amount of storage capacity, it requires coded storage across servers to account for the random topology. We proposed a novel scheme that jointly applies MDS-coded caching at the servers, and uncoded caching and coded delivery to users. The achievable delivery times of this scheme for both successive and simultaneous transmissions from the SBSs are presented, and their averages over the topologies is studied. This analysis shows that, thanks to coding, the price for robustness and reliability using distributed storage is not much even when the servers operate in a time-division manner.

\begin{figure}[t]
\begin{center}
\includegraphics[width=6.5cm, height=6.0cm]{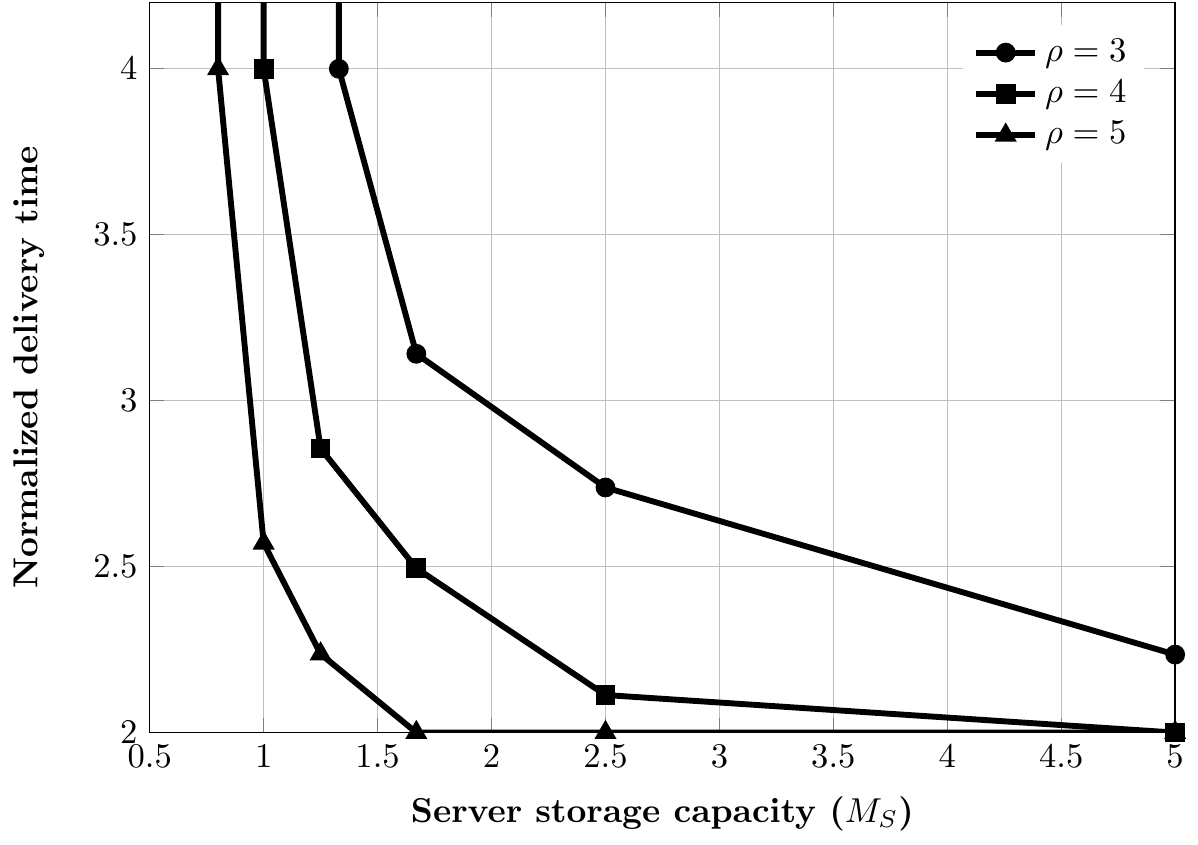}
\caption{Average normalized delivery time vs. server storage capacity $M_S$, for $P=7, N=K=5, M_U=1$ for successive SBS transmissions. }\label{fig:f4}
\end{center}
\end{figure}

\begin{figure}[t]
\begin{center}
\includegraphics[width=6.5cm, height=6.0cm]{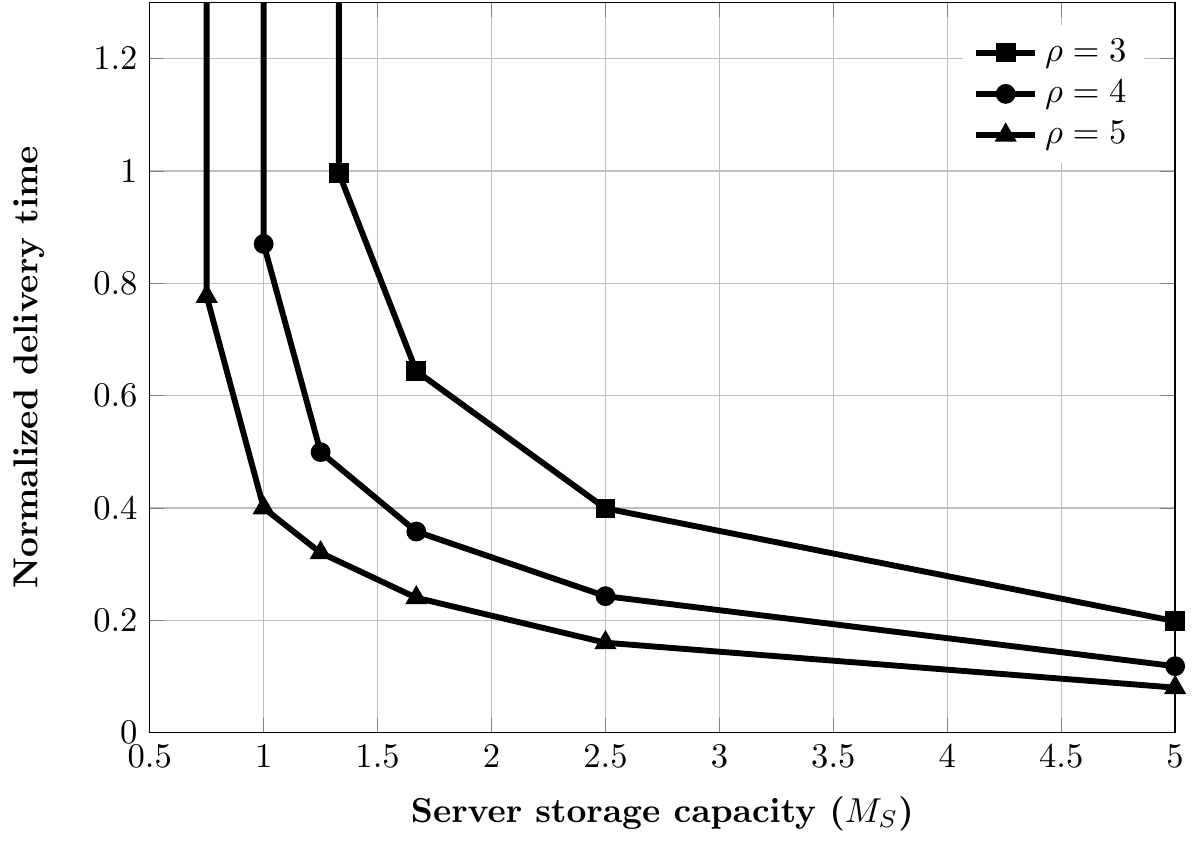}
\caption{Average normalized delivery time vs. server storage capacity $M_S$, for $P=7, N=K=5, M_U=1$ for simultaneous transmissions. }\label{fig:f5}
\end{center}
\end{figure}


\end{document}